\newenvironment{proof}[1][\proofname]{\par\normalfont
  \topsep6pt plus6pt\trivlist\item[\hskip\labelsep\itshape
  #1\@addpunct{:}]\ignorespaces}{\qed\endtrivlist}
\newcommand{\proofname}{Proof}
\DeclareRobustCommand{\qed}{%
  \ifmmode
  \else\leavevmode\unskip\penalty9999\hbox{}\nobreak\hfill\fi
  \quad\hbox{\qedsymbol}}
\newcommand{\qedsymbol}{\openbox}
\newcommand{\openbox}{\leavevmode\hbox to.77778em{%
    \hfil\vrule\vbox to.675em{%
      \hrule width.6em\vfil\hrule}\vrule\hfil}}
\newcommand{\Prb}{\mathsf{P}}\newcommand{\Exp}{\mathsf{E}}
\newcommand{\N}{\mathbb{N}}
\newcommand{\R}{\mathbb{R}}
\newcommand{\dd}{\mathrm{d}}\newcommand{\ee}{\mathrm{e}}
\newcommand{\SIR}{\mathsf{SIR}}
\newcommand{\bsym}[1]{\boldsymbol{#1}}
\newcommand{\ind}[1]{\boldsymbol{1}_{#1}}
\let\Bar\overline
\newcommand{\TPP}{(\mathrm{TPP})}
\newcommand{\MCP}{(\mathrm{MCP})}
\newtheorem{theorem}{Theorem}
\newtheorem{lemma}{Lemma}
\newtheorem{example}{Example}
\title{Downlink coverage probability in cellular networks with
  Poisson-Poisson cluster deployed base stations}
\author{Naoto Miyoshi\\
  Tokyo Institute of Technology}
\date{}
\begin{document}\sloppy\allowbreak\allowdisplaybreaks
\maketitle

\begin{abstract}
Poisson-Poisson cluster processes~(PPCPs) are a class of point
processes exhibiting attractive point patterns.
Recently, PPCPs are actively studied for modeling and analysis of
heterogeneous cellular networks or device-to-device networks.
However, surprisingly, to the best knowledge of the author, there is
no exact derivation of downlink coverage probability in a numerically
computable form for a cellular network with base stations (BSs)
deployed according to a PPCP within the most fundamental setup such as
single-tier, Rayleigh fading and nearest BS association.
In this paper, we consider this fundamental model and derive a
numerically computable form of coverage probability.
To validate the analysis, we compare the results of numerical
computations with those by Monte Carlo simulations and confirm the
good agreement.
\\
\textbf{Keywords:} 
Downlink cellular networks, spatial stochastic models, Poisson-Poisson
cluster processes, coverage probability.
\end{abstract}

\section{Introduction}

Poisson-Poisson cluster processes~(PPCPs) are a class of point
processes~(PPs) exhibiting attractive (clustering) point
patterns~(see, e.g., \cite{BlasYoge14}).
A stationary PPCP is constructed by independent, identical and finite
Poisson point processes~(PPPs), called daughter processes, placed
around points of a homogeneous PPP, called a parent process (detailed
in the next section).
Recently, PPCPs are actively studied for modeling and analysis of
heterogeneous cellular networks~(HetNets) or device-to-device~(D2D)
networks (see, e.g.,
\cite{SuryMollFett15,DengZhouHaen15,ChunHasnGhra15,SahaAfshDhil17,AfshDhil18,AfshDhilChon16,YiLiuNall17,JoshMall18}).
This is because locations of small (pico or femto) base stations~(BSs)
in HetNets or user devices in D2D networks are distributed in a
clustering nature in user hotspots.
However, surprisingly, to the best knowledge of the author, there is
no exact derivation of downlink coverage probability in a numerically
computable form for a cellular network with BSs deployed according to a
PPCP within the most fundamental setup such as single-tier, Rayleigh
fading and nearest BS association.
In this paper, we challenge this fundamental problem.

Indeed, there are several related results.
Suryaprakash~\textit{et al.}~\cite{SuryMollFett15} and Deng~\text{et
  al.}~\cite{DengZhouHaen15} study two-tier HetNets, where macro BSs
are deployed according to a homogeneous PPP and small BSs are
according to a PPCP.
Both of \cite{SuryMollFett15} and \cite{DengZhouHaen15} derive the
Laplace transform of downlink interference and, using this,
conditional downlink coverage probability given the distance to the
serving BS.
One may think that our fundamental problem is covered by their results
combined with the distribution of contact distance (distance to the
nearest point from the origin) of PPCPs derived
in~\cite{AfshSahaDhil17,AfshSahaDhil17b} (as suggested in
\cite{BlasHaenKeelMukh18}).
However, the problem is not so optimistic because we have to take into
account the correlation between the locations of the serving BS and
the interferers through the sharing parent point (PPCPs do not have
the property of independent increments unlike PPPs).
Chun~\textit{et al.}~\cite{ChunHasnGhra15} consider a $K$-tier
downlink HetNet, where BSs in each tier are deployed according to a
PPCP.
They, however, assume orthogonal multiple access and do not consider
interference from BSs in the same cluster as the serving BS.
%
Saha~\textit{et al.}~\cite{SahaAfshDhil17} extensively investigate
several models of HetNets using PPPs and PPCPs.
Though their models cover one of the most fundamental settings as a
special case, a difference from ours is that they consider the
max-SIR association, where a user is associated with the BS offering
the maximum signal-to-interference ratio~(SIR).
In the max-SIR association, one does not have to consider the
distribution of distance to the serving BS.
On the other hand, the nearest BS association is the single-tier
homogeneous version of max-averaged-power association, where a user is
associated with the BS from which the user receives the maximum signal
power averaged over fading.
Afshang and Dhillon~\cite{AfshDhil18} also consider a model of
two-tier HetNets, where locations of users and small BSs are both
distributed according to PPCPs with the same parent process while
macro BSs are deployed according to an independent PPP.
In their model, a user can connect to any macro BSs but to the small
BSs with the same parent point.
For D2D networks, Afshang~\textit{et al.}~\cite{AfshDhilChon16},
Yi~\textit{et   al.}~\cite{YiLiuNall17} and Joshi and
Mallik~\cite{JoshMall18} consider the models, where user devices are
distributed according to a PPCP and a device communicates only with
another device in the same cluster.

We here consider the most fundamental setup of downlink cellular
networks, where single-tier BSs are deployed according to a PPCP.
Under the assumption of Rayleigh fading and the nearest BS
association, we derive a numerically computable form of coverage
probability.
To do this, we first derive the conditional coverage probability given
the parent process.
Since a PPCP is in the class of Cox (doubly stochastic Poisson)
processes (see, e.g., \cite{ChiuStoyKendMeck13}), it is conditionally
an inhomogeneous PPP provided the parent process.
Therefore, we can apply the discussion for PPP networks and then
arrive at the goal by unconditioning.
To validate the analysis, we compare the results of numerical
computations with those by Monte Carlo simulations.

\section{Poisson-Poisson cluster processes}\label{sec:PPCP}

A stationary PPCP on $\R^2$ is constructed by an independently marked
homogeneous PPP as follows.
Let $\Phi^{(p)} = \{X_i\}_{i\in\N}$ denote a homogeneous PPP on
$\R^2$, called a parent process, with intensity $\lambda_p$.
A mark $\Psi_i=\{Y_{i,j}\}_{j\in\N}$ of the point $X_i$ is a finite
(therefore inhomogeneous) PPP on $\R^2$, called a daughter process,
with intensity function $\lambda_d(x)$, $x\in\R^2$, satisfying
$\int_{\R^2}\lambda_d(x)\,\dd x = \alpha$; that is, the number of
daughter points per parent follows a Poisson distribution with mean
$\alpha$.
Then, a PPCP is given as $\Phi = \{Z_i\}_{i\in\N} =
\bigcup_{i\in\N}\{X_i + \Psi_i\}$, which is stationary with intensity
$\lambda_p\alpha$.
Throughout this paper, we focus on radially symmetric daughter
processes, so that $\lambda_d(x) = \alpha\,f_d(\|x\|)$ and $\Phi$ is
isotropic as well.
Two main examples of the PPCPs are the (modified) Thomas PP~(TPP) and
the Mat\'{e}rn cluster process~(MCP)~(see, e.g.,
\cite{ChiuStoyKendMeck13}).
When $f_d(s) = f_d^{\TPP}(s) =
\exp\{-s^2/(2\sigma^2)\}/(2\pi\sigma^2)$, $\sigma>0$, the PPCP is
called the TPP, where daughter points are independently and normally
distributed around each given parent point with covariance
matrix~$\sigma^2I$ ($I$ denotes the identity matrix).
On the other hand, when $f_d(s) = f_d^{\MCP}(s) =
\ind{[0,r_d]}(s)/(\pi{r_d}^2)$, $r_d>0$, the PPCP is called the MCP,
where daughter points are independently and uniformly distributed on
the ball of radius~$r_d$ centered at each given parent point.
PPCPs are a class of Cox PPs, so that, when the parent
process~$\Phi^{(p)} = \{X_i\}_{i\in\N}$ is provided, the PPCP~$\Phi$
is conditionally an inhomogeneous PPP with the shot-noise intensity
function;
\begin{equation}\label{eq:ConditionalIntensity}
  \lambda(y\mid\Phi^{(p)})
  = \sum_{i=1}^\infty\lambda_d(y-X_i)
  = \alpha\sum_{i=1}^\infty f_d(\|y - X_i\|),
  \quad y\in\R^2.
\end{equation}

For a stationary PP~$\Phi$ on $\R^2$, contact distance of $\Phi$
is defined as the distance from an arbitrary fixed reference point on
$\R^2$ to the nearest point of $\Phi$.
Here, due to the stationarity, we can choose the origin $o=(0,0)$ as
the reference point.
The conditional distribution of the contact distance given the parent
process is derived as follows.

\begin{lemma}\label{lem:ConditionalCDDistribution}
Let $\Phi$ denote a PPCP described above.
The conditional distribution function of contact distance of $\Phi$
provided the parent process~$\Phi^{(p)} = \{X_i\}_{i\in\N}$ is
given by
\begin{equation}\label{eq:ConditionalCDDistribution}
  F_{\mathrm{cd}}(r \mid \Phi^{(p)})
  = 1 - \prod_{i=1}^\infty  
          \exp\bigl\{
            - \alpha\,G(r \mid \|X_i\|)
          \bigr\},
\end{equation}
where
$G(r \mid s) = 2\int_0^r\int_0^\pi
  u\,f_d\bigr(\sqrt{u^2 + s^2 - 2\,u s\cos\phi}\bigr)\,
\dd\phi\,\dd u$.
Furthermore, the corresponding conditional density function is given
by
\begin{equation}\label{eq:ConditionalDensity}
  f_{\mathrm{cd}}(r \mid \Phi^{(p)})
  = \alpha
    \sum_{i=1}^\infty
       g(r \mid \|X_i\|)
    \prod_{i=1}^\infty
      \exp\bigl\{
        - \alpha\,G(r \mid \|X_i\|)
      \bigr\},
\end{equation}
where $g(r \mid s) = \partial G(r \mid s)/\partial r = 2\,r\int_0^\pi
f_d\bigr(\sqrt{r^2 + s^2 - 2\,r s\cos\phi}\bigr)\, \dd\phi$.
\end{lemma}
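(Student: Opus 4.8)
The plan is to reduce the conditional contact-distance law to a void probability and then invoke the conditional Poisson structure recorded in~\eqref{eq:ConditionalIntensity}. First I would note that the contact distance exceeds $r$ exactly when $\Phi$ places no point in the disk $B(o,r)$ of radius~$r$ centered at the origin~$o$; hence $F_{\mathrm{cd}}(r\mid\Phi^{(p)}) = 1 - \Prb\bigl(\Phi(B(o,r))=0\mid\Phi^{(p)}\bigr)$. Since, conditionally on the parent process, $\Phi$ is an inhomogeneous PPP with intensity $\lambda(\cdot\mid\Phi^{(p)})$, the void probability is given by the standard Poisson formula $\Prb(\Phi(B(o,r))=0\mid\Phi^{(p)}) = \exp\{-\int_{B(o,r)}\lambda(y\mid\Phi^{(p)})\,\dd y\}$, so everything comes down to evaluating this integral.

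The core computation is $\int_{B(o,r)}\lambda(y\mid\Phi^{(p)})\,\dd y = \alpha\sum_i\int_{B(o,r)}f_d(\|y-X_i\|)\,\dd y$, obtained by inserting~\eqref{eq:ConditionalIntensity}. I would evaluate each summand in polar coordinates about the origin, $y=(u\cos\phi,u\sin\phi)$ with $\dd y = u\,\dd u\,\dd\phi$, so that $u=\|y\|$ ranges over $[0,r]$. Exploiting the radial symmetry of $f_d$, I may align $X_i$ with the polar axis and use the law of cosines to write $\|y-X_i\|^2 = u^2 + \|X_i\|^2 - 2\,u\|X_i\|\cos\phi$; the integrand is symmetric about $\phi=0$, so the integral over $[0,2\pi)$ reduces to twice the integral over $[0,\pi]$ and reproduces exactly $G(r\mid\|X_i\|)$. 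Summing over~$i$ and exponentiating then yields the product form~\eqref{eq:ConditionalCDDistribution}.

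For the density~\eqref{eq:ConditionalDensity}, I would differentiate the survival function $\exp\{-\alpha\sum_i G(r\mid\|X_i\|)\}$ in~$r$. The chain rule gives $f_{\mathrm{cd}}(r\mid\Phi^{(p)}) = \alpha\bigl(\sum_i \partial G(r\mid\|X_i\|)/\partial r\bigr)\exp\{-\alpha\sum_i G(r\mid\|X_i\|)\}$, and differentiating $G$ with respect to its upper radial limit produces $g(r\mid s)=\partial G(r\mid s)/\partial r$ precisely as defined, completing the identification.

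The step I expect to require the most care is the justification that differentiation may be exchanged with the infinite sum over parent points, which rests on the almost-sure finiteness of $\sum_i G(r\mid\|X_i\|)$ and the term-by-term differentiability of this series. Finiteness follows from Campbell's theorem: using $\int_{\R^2}f_d(\|x\|)\,\dd x = 1$ one computes $\Exp[\sum_i G(r\mid\|X_i\|)] = \lambda_p\int_{\R^2}G(r\mid\|x\|)\,\dd x = \lambda_p\,\pi r^2 < \infty$, so the series converges almost surely, and a routine dominated-convergence argument (for fixed~$r$ only finitely many far-away parents contribute appreciably) legitimizes interchanging $\partial/\partial r$ with $\sum_i$.
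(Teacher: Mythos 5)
Your proposal is correct and follows essentially the same route as the paper: reduce the conditional contact-distance distribution to a void probability, apply the conditional inhomogeneous-PPP structure from \eqref{eq:ConditionalIntensity}, evaluate $\int_{b_o(r)} f_d(\|y-X_i\|)\,\dd y$ in polar coordinates via the law of cosines to recover $G(r\mid\|X_i\|)$, and differentiate in $r$ to obtain \eqref{eq:ConditionalDensity}. Your closing paragraph justifying the term-by-term differentiation of the series $\sum_i G(r\mid\|X_i\|)$ via Campbell's theorem is a point of rigor the paper's proof passes over silently, and it is a welcome addition rather than a departure.
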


\begin{proof}
Let $b_o(r)$, $r>0$, denote the ball on $\R^2$ centered at the origin
with radius~$r$.
When the parent process $\Phi^{(p)} = \{X_i\}_{i\in\N}$ is provided,
$\Phi$ is (conditionally) an inhomogeneous PPP with the intensity
function given in \eqref{eq:ConditionalIntensity}.
Therefore, the conditional probability that $\Phi$ has no points in
$b_o(r)$ is given by
\[
  \Prb\bigl(
    \Phi\bigl(b_o(r)\bigr)=0 \bigm| \Phi^{(p)}
  \bigr)
  = \prod_{i=1}^\infty  
       \exp\Bigl\{
         -\alpha
          \int_{b_o(r)} f_d(\|y-X_i\|)\,\dd y
       \Bigr\}.
\]
Putting $y = (u\cos\phi, u\sin\phi)$ and $X_i=(X_{i,1},X_{i,2})$ in
the integral on the right-hand side above yields
\begin{align*}
  \int_{b_o(r)} f_d(\|y-X_i\|)\,\dd y
  &= \int_0^r\!\!\int_0^{2\pi}
       u\,
       f_d\Bigl(\sqrt{
         (u\cos\phi-X_{i,1})^2 + (u\sin\phi-X_{i,2})^2
       }\Bigr)\,
     \dd\phi\,\dd u
  \\
  &= G(r \mid \|X_i\|),
\end{align*}
which leads to \eqref{eq:ConditionalCDDistribution} since
$F_{\mathrm{cd}}(r \mid \Phi^{(p)}) = 1 - \Prb\bigl(
\Phi\bigl(b_o(r)\bigr)=0 \bigm| \Phi^{(p)} \bigr)$.
Differentiating \eqref{eq:ConditionalCDDistribution} with respect to
$r$ gives \eqref{eq:ConditionalDensity}.
\end{proof}

Note that $G(r \mid \cdot)$ and $g(r \mid \cdot)$ in
Lemma~\ref{lem:ConditionalCDDistribution} are, respectively, a
probability distribution function and the corresponding density
function with respect to $r\in[0,\infty)$ in the sense that
$\lim_{r\to\infty}G(r \mid \|x\|) = \int_0^\infty g(r \mid \|x\|)\,\dd
  r = \int_{\R^2} f_d(\|y-x\|)\,\dd y = 1$ for any $x\in\R^2$.
The distribution~$G(\cdot\mid s)$ gives the conditional distribution
of the distance to a daughter point from the origin provided that its
parent point is located at $x$ satisfying $\|x\|=s$.

\begin{example}[TPP]
For the TPP, applying $f_d^{\TPP}(s) =
\exp\{-s^2/(2\sigma^2)\}/(2\pi\sigma^2)$, the conditional distribution
$G(r \mid s)$ on the right-hand side of
\eqref{eq:ConditionalCDDistribution} reduces to
\begin{equation}\label{eq:G_TPP}
  G^{\TPP}(r \mid s)
  = \frac{1}{\sigma^2}
    \int_0^r
      u\,
      \exp\Bigl(-\frac{u^2 + s^2}{2\sigma^2}\Bigr)\,
      I_0\Bigl(\frac{u\,s}{\sigma^2}\Bigr)\,
    \dd u
  = 1 - Q_1\Bigl(\frac{s}{\sigma},\frac{r}{\sigma}\Bigr),
\end{equation}
where $I_0$ denotes the modified Bessel function of the first kind
with order zero;
$I_0(z)=\pi^{-1}\int_0^\pi\ee^{z\cos\phi}\,\dd\phi$, and $Q_1$
denotes the first-order Marcum $Q$-function defined as (see, e.g.,
\cite{ProaSale07})
\[
  Q_1(a,b)
  = \int_b^\infty
      x\,\exp\Bigl(-\frac{x^2 + a^2}{2}\Bigr)\,I_0(a\,x)\,\dd x.
\]
Therefore, differentiating \eqref{eq:G_TPP} gives the density function;
\begin{equation}\label{eq:g_TPP}
  g^{\TPP}(r\mid s)
  = \frac{1}{\sigma}\,
    q\Bigl(\frac{s}{\sigma}, \frac{r}{\sigma}\Bigr),
\end{equation}
where $q(a,b) = - \partial Q_1(a,b)/\partial b = b\,\exp\bigl(-(a^2 +
b^2)/2\bigr)\,I_0(ab)$.
\end{example}

\begin{example}[MCP]
For the MCP, since $f_d^{\MCP}(s) = \ind{[0,r_d]}(s)/(\pi{r_d}^2)$,
the conditional distribution $G(r \mid s)$ in
\eqref{eq:ConditionalCDDistribution} reduces to
\begin{align}\label{eq:G_MCP}
  G^{\MCP}(r \mid s)
  &= \frac{2}{\pi\,{r_d}^2}
     \int_0^r\!\!\int_0^\pi
       u\,\bsym{1}\Bigl\{
            \cos\phi \ge \frac{u^2 + s^2 - {r_d}^2}{2\,u s}
          \Bigr\}\,
     \dd\phi\,\dd u
  \nonumber\\
  &= \frac{1}{{r_d}^2}\,
     \bigg\{
       \bigl[r\wedge(r_d-s)^+\bigr]^2
       + \frac{2}{\pi}
         \int_{r\wedge|r_d-s|}^{r\wedge(r_d+s)}
           u\,\arccos\Bigl(\frac{u^2 + s^2 - {r_d}^2}
                                {2\,u s}\Bigr)\,
         \dd u
     \biggr\},
\end{align} 
where $x^+=\max(x,0)$, $x\wedge y=\min(x,y)$, and we use
$\int_0^\pi\bsym{1}\{\cos\phi\ge x\}\,\dd\phi =
\pi\,\ind{(-\infty,-1]}(x) + \arccos x\,\ind{(-1,1]}(x)$ in the second
equality.
Hence, the density function is given as
\begin{equation}\label{eq:g_MCP}
  g^{\MCP}(r\mid s)
  = \frac{2\,r}{{r_d}^2}\,
    \biggl\{
      \ind{[0,(r_d - s)^+]}(r)
      + \frac{1}{\pi}\,
        \arccos\Bigl(
          \frac{r^2 + s^2 - {r_d}^2}{2\,r s}
        \Bigr)\,
        \ind{[|r_d-s|, r_d+s]}(r)
    \biggr\}.
\end{equation}
\end{example}

Note that \eqref{eq:g_TPP} has the same form as (3) in
\cite{AfshSahaDhil17} and that \eqref{eq:g_MCP} does so as the couple
of (2) and (3) in \cite{AfshSahaDhil17b}.
We can obtain the same results as in
\cite{AfshSahaDhil17,AfshSahaDhil17b} by plugging \eqref{eq:G_TPP} or
\eqref{eq:G_MCP} into \eqref{eq:ConditionalCDDistribution} and then
unconditioning it on $\Phi^{(p)}$ with the use of the probability
generating functional~(PGFL) for PPPs~(see, e.g.,
\cite{ChiuStoyKendMeck13}).
In other words, we have a unified form of contact distance
distributions for PPCPs as
\[
  F_{\mathrm{cd}}(r)
  = 1 - \exp\biggl\{
          - 2\,\pi\lambda_p
            \int_0^\infty
              \bigl[
                1 - \exp\bigl\{-\alpha\,G(r\mid s)\bigr\}
              \bigr]\,s\,
            \dd s
        \biggr\},
  \quad r\ge 0.
\]

\section{Coverage probability for a downlink cellular network}

We consider a fundamental model of single-tier homogeneous downlink
cellular networks.
Let $\Phi = \{Z_i\}_{i\in\N}$ denote a stationary PP on $\R^2$
representing locations of BSs, where the order of the points is
arbitrary but $Z_1$ is the nearest from the origin; that is,
$\|Z_1\|<\|Z_i\|$ for $i\ge2$.
All the BSs transmit signals at the same power level and each user is
associated with the nearest BS.
Due to the stationarity and homogeneity, we can focus on a typical
user located at the origin.
For each $i\in\N$, let $H_i$ denote a nonnegative random variable
representing a fading effect on a signal from the BS at $Z_i$ to the
typical user, where we assume Rayleigh fading and $H_i$, $i\in\N$, are
mutually independent and exponentially distributed, as well as
independent of $\Phi$.
We assume $\Exp H_1=1$ without loss of generality and ignore
shadowing.
The path-loss function representing signal attenuation with distance
is given by $\ell(r)$, $r>0$, which satisfies $\int_\epsilon^\infty
r\,\ell(r)\,\dd r<\infty$ for any $\epsilon>0$.
With this setup, the SIR for the typical user is defined as
\begin{equation}\label{eq:SIR}
  \SIR_o = \frac{H_1\,\ell(\|Z_1\|)}
                {\sum_{i=2}^\infty H_i\,\ell(\|Z_i\|)}.
\end{equation}
Since $Z_1$ is the nearest point of $\Phi$ from the origin, $\|Z_1\|$
gives the contact distance of $\Phi$.
Our interest is in the coverage probability $\Prb(\SIR_o > \theta)$ for
$\theta > 0$ when the PP~$\Phi$ is given as a PPCP.

\begin{theorem}\label{thm:main}
For the downlink cellular network model described above, when the
PP~$\Phi$ is a stationary PPCP given in Section~\ref{sec:PPCP}, we
have
\begin{equation}\label{eq:Coverage}
  \Prb(\SIR_o > \theta)
  = \alpha\int_0^\infty T(r,\theta)\,M(r,\theta)\,\dd r,
\end{equation}
where
\begin{align*}
  T(r,\theta)
  &= 2\,\pi\lambda_p
     \int_0^\infty
       g(r \mid s)\,C(r, s, \theta)\,s\,
     \dd s,
  \\  
  M(r,\theta)
  &= \exp\biggl\{
       - 2\,\pi\lambda_p
         \int_0^\infty
           \bigl[1 - C(r,s,\theta)\bigr]\,s\,
         \dd s
     \biggr\},
\end{align*}  
with $g(\cdot\mid\cdot)$ given in
Lemma~\ref{lem:ConditionalCDDistribution} and
\begin{equation}\label{eq:C}
  C(r,s,\theta)
  = \exp\biggl\{
      - \alpha\,
        \biggl[
          1 - \int_r^\infty
                \Bigl(
                  1 + \theta\,\frac{\ell(u)}{\ell(r)}
                \Bigr)^{-1}\,
                g(u \mid s)\,
              \dd u
        \biggr]
    \biggr\}.
\end{equation}
\end{theorem}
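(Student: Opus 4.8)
The plan is to exploit the Cox structure of the PPCP and prove Theorem~\ref{thm:main} in two stages: first the coverage probability conditional on the parent process $\Phi^{(p)}$, and then the unconditioning. Given $\Phi^{(p)}$, the process $\Phi$ is an inhomogeneous PPP with the shot-noise intensity \eqref{eq:ConditionalIntensity}, so the machinery available for PPP networks applies. Singling out the serving distance $\|Z_1\|=r$ and averaging over the serving fading $H_1$ (unit-mean exponential, independent of everything else) reduces the coverage event to an interference Laplace transform,
\[
  \Prb\bigl(\SIR_o>\theta\bigm|\|Z_1\|=r,\Phi^{(p)}\bigr)
  = \Exp\bigl[\exp\bigl(-\tfrac{\theta}{\ell(r)}\,I\bigr)\bigm|\|Z_1\|=r,\Phi^{(p)}\bigr],
\]
where $I=\sum_{i\ge2}H_i\,\ell(\|Z_i\|)$ is the interference from the points of $\Phi$ outside the ball $b_o(r)$ of radius $r$ about the origin. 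I would treat the serving point by the Campbell--Mecke formula for the conditional PPP: extracting the nearest point as the marked point leaves, by Slivnyak's property, a fresh copy of the conditional PPP, whose void over $b_o(r)$ encodes the ``nearest'' constraint and whose points outside supply the interference.

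The crux is to evaluate the joint functional that simultaneously enforces the void of $b_o(r)$ and carries the fading-averaged interference weight on its complement, and to recognise that it factorises over clusters because the intensity \eqref{eq:ConditionalIntensity} is a superposition. For one cluster with parent at distance $s$, the marked PGFL of its daughter PPP with the per-point weight $\bsym{1}\{\|z\|>r\}\,(1+\theta\,\ell(\|z\|)/\ell(r))^{-1}$---which vanishes inside $b_o(r)$ to enforce the void and equals the Rayleigh Laplace factor outside---evaluates, after passage to polar coordinates, to exactly $C(r,s,\theta)$ of \eqref{eq:C}. The serving-point intensity $\alpha f_d(\,\cdot\,-X_j)$ likewise integrates to the density $\alpha\,g(r\mid\|X_j\|)$ in $r$, the sum over~$j$ reflecting that any parent may host the serving BS. Collecting these factors gives
\[
  \Prb(\SIR_o>\theta\mid\Phi^{(p)})
  = \alpha\int_0^\infty
      \sum_{j=1}^\infty g(r\mid\|X_j\|)
      \prod_{i=1}^\infty C(r,\|X_i\|,\theta)\,
    \dd r,
\]
which I would check at $\theta=0$: there $C(r,s,0)=\exp\{-\alpha\,G(r\mid s)\}$, so the integrand collapses to the density $f_{\mathrm{cd}}(r\mid\Phi^{(p)})$ of Lemma~\ref{lem:ConditionalCDDistribution} and the total is~$1$.

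Finally I would uncondition on $\Phi^{(p)}$, a homogeneous PPP of intensity $\lambda_p$. Interchanging $\Exp_{\Phi^{(p)}}$ with the $r$-integral (legitimate by Tonelli, all integrands being nonnegative), the inner expectation has the form ``sum over one parent point times product over all parent points''. A second application of the Campbell--Mecke formula extracts the serving parent $x$ and rewrites it as $\lambda_p\int_{\R^2}g(r\mid\|x\|)\,C(r,\|x\|,\theta)\,\dd x$ times the PGFL $\Exp\bigl[\prod_i C(r,\|X_i\|,\theta)\bigr]=\exp\{-\lambda_p\int_{\R^2}[1-C(r,\|x\|,\theta)]\,\dd x\}$, the latter integral being finite because $1-C(r,s,\theta)=O(\ell(s))$ as $s\to\infty$ under the path-loss integrability $\int_\epsilon^\infty u\,\ell(u)\,\dd u<\infty$; passage to polar coordinates then identifies the first factor with $T(r,\theta)$ and the second with $M(r,\theta)$, which is \eqref{eq:Coverage}. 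I expect the genuine obstacle to be precisely the correlation flagged in the introduction---the serving BS and part of the interference share the serving cluster's parent---and to see that it is captured by the factor $C(r,\|x\|,\theta)$ that this outer Mecke step attaches to the serving parent $x$ inside $T(r,\theta)$: this term accounts for the other daughter points hosted by the serving BS's own parent, that is, the interferers correlated with the signal link. The Slivnyak/Mecke machinery, applied once at the daughter level and once at the parent level, is exactly what keeps this shared-parent coupling intact instead of spuriously decoupling it.
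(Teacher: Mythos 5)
Your proposal is correct and follows essentially the same route as the paper's proof: condition on the parent process, use the Rayleigh/PGFL machinery for the conditional inhomogeneous PPP so that each cluster contributes exactly the factor $C(r,\|X_i\|,\theta)$ and the serving point contributes the density $\alpha\sum_j g(r\mid\|X_j\|)$, then uncondition via the Campbell--Mecke formula plus the PGFL for the homogeneous parent PPP, yielding $T(r,\theta)\,M(r,\theta)$ in polar coordinates. The only place you are looser than the paper is the justification that $\int_0^\infty[1-C(r,s,\theta)]\,s\,\dd s<\infty$ (needed to legitimize the parent-level PGFL): your claim $1-C(r,s,\theta)=O(\ell(s))$ as $s\to\infty$ is not valid for a general radially symmetric daughter density, since $-\log C$ contains the term $\alpha\,G(r\mid s)$ whose tail in $s$ is governed by $f_d$, not by $\ell$ (e.g.\ a polynomially decaying $f_d$ combined with an exponentially decaying $\ell$ breaks the bound). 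The paper instead gets finiteness exactly from the symmetry $s\,g(r\mid s)=r\,g(s\mid r)$, which gives $\int_0^\infty G(r\mid s)\,s\,\dd s=r^2/2$ and bounds the remaining interference term by $\ell(r)^{-1}\int_r^\infty u\,\ell(u)\,\dd u<\infty$; substituting that argument for your asymptotic claim makes your proof complete.
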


\begin{proof}
Since $\Phi$ is conditionally an inhomogeneous PPP provided $\Phi^{(p)}
= \{X_i\}_{i\in\N}$, we can follow the standard discussion for the PPP
network with Rayleigh fading~(see, e.g.,
\cite{AndrBaccGant11} for the homogeneous PPP case); that is,
application of \eqref{eq:ConditionalIntensity} and
\eqref{eq:ConditionalDensity} to \eqref{eq:SIR} leads to
\begin{align}\label{eq:ConditionalCoveragePrf1}
  \Prb(\SIR_o > \theta \mid \Phi^{(p)})
  &= \Exp\biggl[
       \prod_{i=2}^\infty
         \Bigl(
           1 + \theta\,\frac{\ell(\|Z_i\|)}{\ell(\|Z_1\|)}
         \Bigr)^{-1}
     \biggm| \Phi^{(p)}\biggr]
  \nonumber\\
  &= \int_0^\infty
       f_{\mathrm{cd}}(r \mid \Phi^{(p)})\,
       \exp\biggl\{
         - \int_{\|y\|>r}
             \biggl[
               1 - \Bigl(
                     1 + \theta\,\frac{\ell(\|y\|)}{\ell(r)}
                   \Bigr)^{-1}
             \biggr]\,
             \lambda(y\mid\Phi^{(p)})\,
           \dd y
       \biggr\}\,
     \dd r,
\end{align}
where we use the distribution function and the Laplace transform of
exponential random variables in the first equality, and the PGFL for
(inhomogeneous) PPPs in the second equality.
By \eqref{eq:ConditionalIntensity} and $g(\cdot\mid\cdot)$ in
Lemma~\ref{lem:ConditionalCDDistribution}, the integral inside the
exponential function above is equal to the sum over $i\in\N$ of
\begin{align*}
  \alpha
  \int_{\|y\|>r}
    \Bigl[
      1 - \Bigl(
            1 + \theta\,\frac{\ell(\|y\|)}{\ell(r)}
          \Bigr)^{-1}
    \Bigr]\,
    f_d(\|y - X_i\|)\,
  \dd y  
  &= \alpha
     \int_r^\infty
       \Bigl[
         1 - \Bigl(
               1 + \theta\,\frac{\ell(u)}{\ell(r)}
             \Bigr)^{-1}
       \Bigr]\,
       g(u \mid \|X_i\|)\,
     \dd u
  \\
  &= \alpha\biggl[
       \Bar{G}(r\mid\|X_i\|)
       - \int_r^\infty
           \Bigl(
             1 + \theta\,\frac{\ell(u)}{\ell(r)}
           \Bigr)^{-1}\,
           g(u \mid \|X_i\|)\,
         \dd u
     \biggr],
\end{align*}  
where $\Bar{G}(r\mid s) = 1 - G(r\mid s)$ and the same discussion as
in the proof of Lemma~\ref{lem:ConditionalCDDistribution} is used.
Therefore, plugging  this into \eqref{eq:ConditionalCoveragePrf1} and
using \eqref{eq:ConditionalDensity}, we reduce
\eqref{eq:ConditionalCoveragePrf1} to
\[
  \Prb(\SIR_o > \theta \mid \Phi^{(p)})
  = \alpha
    \int_0^\infty
      \sum_{i=1}^\infty
        g(r \mid \|X_i\|)
      \prod_{i=1}^\infty
        C(r, \|X_i\|, \theta)\,
    \dd r,
\]
where $C$ is given in \eqref{eq:C}.
Hence, unconditioning on $\Phi^{(p)}$, we have
\begin{align}\label{eq:ConditionalCoveragePrf2}
  \Prb(\SIR_o > \theta)
  &= \alpha
     \int_0^\infty
       \Exp\biggl[
         \sum_{i=1}^\infty
           g(r \mid \|X_i\|)
         \prod_{i=1}^\infty
           C(r, \|X_i\|, \theta)
       \biggr]\,
     \dd r
  \nonumber\\
  &= \alpha\,\lambda_p
     \int_0^\infty\!\!
       \int_{\R^2}
         g(r \mid \|x\|)\,C(r, \|x\|, \theta)\,
       \dd x\,
       \exp\biggl\{
         -\lambda_p
          \int_{\R^2}
            \bigl[1 - C(r, \|x\|, \theta)\bigr]\,
          \dd x
       \biggr\}\,
     \dd r,
\end{align}
where we apply the Campbell-Mecke formula (see, e.g.,
\cite{ChiuStoyKendMeck13}) by regarding $\prod_{j=1,j\ne i}^\infty
C(r, \|X_j\|, \theta)$ as a mark of the point~$X_i$, and then use the
PGFL for homogeneous PPPs in the second equality (this type of
transform is also used in \cite{SchiToumHaenCrisBranBett16}).
It is immediate to see that the right-hand side of
\eqref{eq:ConditionalCoveragePrf2} is equal to that of
\eqref{eq:Coverage}.
Actually, we have to confirm whether the PGFL is applicable in
\eqref{eq:ConditionalCoveragePrf2} and this is done by showing
$\lambda_p\int_{\R^2}\bigl|\log C(r, \|x\|,\theta)\bigr|\,\dd x <
\infty$ (see, e.g., \cite[pp. 59--60]{DaleVere08}).
By \eqref{eq:C}, noting that $g(\cdot\mid s)$ is a probability density
function for any $s>0$, we have
\begin{align}\label{eq:appdx1}
  \int_{\R^2}
    \bigl|\log C(r, \|x\|,\theta)\bigr|\,
  \dd x
  &= 2\,\pi\alpha
    \int_0^\infty
      \biggl[
        1 - \int_r^\infty
              \Bigl(
                1 + \theta\,\frac{\ell(u)}{\ell(r)}
              \Bigr)^{-1}\,
              g(u \mid s)\,
            \dd u
      \biggr]\,
    s\,\dd s
  \nonumber\\
  &= 2\,\pi\alpha
    \int_0^\infty
      \biggl[
        G(r\mid s)
        + \theta
          \int_r^\infty
            \frac{\ell(u)}{\theta\,\ell(u) + \ell(r)}\,
            g(u \mid s)\,
          \dd u
     \biggr]\,
    s\,\dd s.
\end{align}
For the first term in the integrand above, the symmetry of $s\,g(r\mid
s) = r\,g(s\mid r)$ (see Lemma~\ref{lem:ConditionalCDDistribution})
implies
\[
  \int_0^\infty G(r\mid s)\,s\,\dd s = \int_0^r u\,\dd u =
  \frac{r^2}{2} < \infty,
\]
where we use $\int_0^\infty g(s\mid r)\,\dd s = 1$.
For the second term in the integrand of \eqref{eq:appdx1}, we have
similarly
\begin{align*}
  \int_0^\infty\!\!\int_r^\infty
    \frac{\ell(u)}
         {\theta\,\ell(u) + \ell(r)}\,
    g(u \mid s)\,
  \dd u\,s\,\dd s
  &= \int_r^\infty
       \frac{u\,\ell(u)}
            {\theta\,\ell(u) + \ell(r)}\,
     \dd u
  \\
  &\le \frac{1}{\ell(r)}\int_r^\infty u\,\ell(u)\,\dd u < \infty,
\end{align*}      
where the last inequality follows from the assumption on the path-loss
function.
\end{proof}

\section{Numerical experiments}

Figure~\ref{fig:TPP} and \ref{fig:MCP} display the comparison results
of numerical computations based on our analysis and Monte Carlo
simulations.
Throughout the experiments, we fix the path-loss function as $\ell(r)
= r^{-4}$, $r>0$, and the parameters $\lambda_p = 0.1/\pi$, $\alpha =
10$.
In both the TPP and the MCP, three cases of $\Exp[\|Y_{i,j}\|^2] =
0.6$, $1.4$, and $3.0$ are computed (that is, $\sigma^2=0.3$, $0.7$,
and $1.5$ in the TPP, and ${r_d}^2 = 1.2$, $2.8$, and $6.0$ in the
MCP).
In each simulation run, samples of parent points are put on the disk
with radius~$100$ and daughter points are scattered around the parent
points.
Then, the estimated coverage probability is obtained by average taken
over 20,000 independent copies.
The agreement between the theoretical and simulation results supports
the validity of our analysis.

\begin{figure}
\centering%
\includegraphics[width=.7\linewidth]{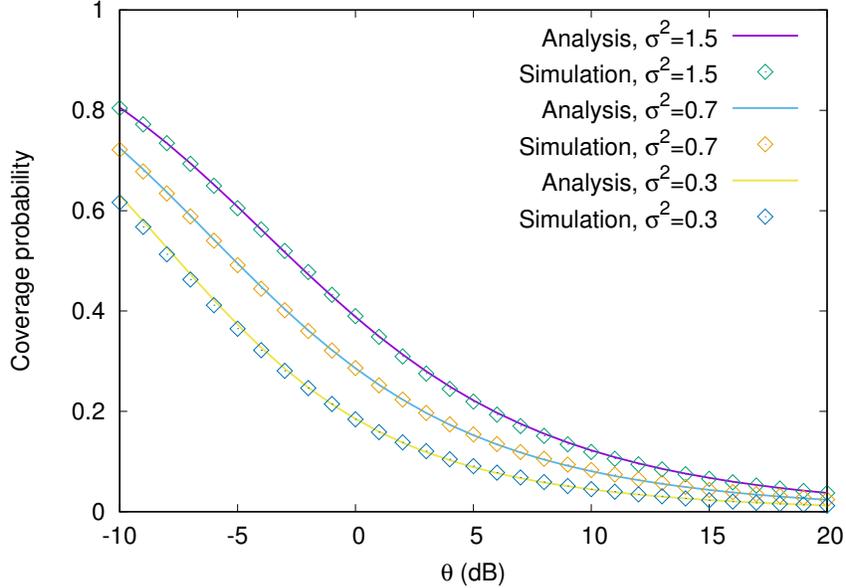}
\caption{Coverage probability in the TPP network with
  $\lambda_p=0.1/\pi$, $\alpha=10$.}
\label{fig:TPP}
\end{figure}

\begin{figure}
\centering%
\includegraphics[width=.7\linewidth]{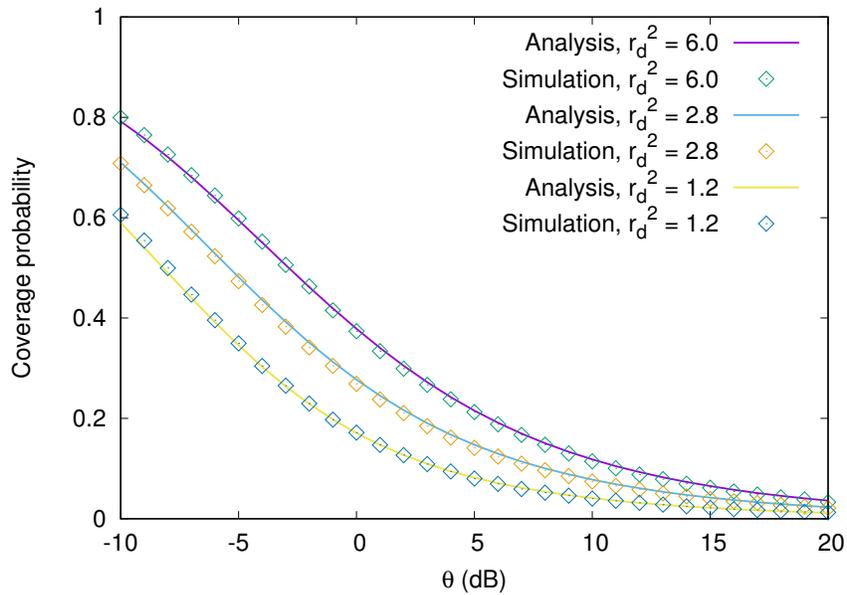}
\caption{Coverage probability in the MCP network with
  $\lambda_p=0.1/\pi$, $\alpha=10$.}
\label{fig:MCP}
\end{figure}

We should notice that the actual numerical computation of the coverage
probability using \eqref{eq:Coverage} is not so easy.
In particular, the integral inside function~$M$ in
Theorem~\ref{thm:main} hardly converges in a numerical sense (though
the finite existence is ensured in the proof of
Theorem~\ref{thm:main}) and we should take a truncation technique
carefully.

\section{Conclusion}

We have considered a spatial downlink cellular network model with BSs
deployed according to a PPCP and, within the most fundamental setup
such as single-tier, Rayleigh fading and nearest BS association, we
have derived the coverage probability in a numerically computable
form.
This work does not only fill in a hole of the literature but also is
expected to play a role of a building block for analysis of, for
example, HetNets with a tier consisting of open access small cell BSs.

\section*{Acknowledgments}

Thanks are due to one of the anonymous reviewers of the author's
another paper~\cite{TakaChenKobaMiyo17}, whose comment asking the
existence of analytical results for coverage probability in TPP
networks motivated the author.

The author's work was supported by the Japan Society for the Promotion
of Science (JSPS) Grant-in-Aid for Scientific Research (C) 16K00030.
  
\bibliographystyle{IEEETran}
\bibliography{../references}


\end{document}